\documentclass[11pt]{article}

\usepackage{amsmath, amssymb, amsthm}
\usepackage{mathtools}
\usepackage{bbm}
\usepackage{physics}
\usepackage{graphicx}
\usepackage{float}
\usepackage{booktabs}
\usepackage{geometry}
\usepackage{placeins}
\usepackage{authblk}
\usepackage{natbib}
\setcitestyle{authoryear, open={(}, close={)}}

\usepackage{algorithm}
\usepackage{algpseudocode}

\usepackage{hyperref}
\hypersetup{
    colorlinks=true,
    linkcolor=blue,
    citecolor=blue
}

\newtheorem{theorem}{Theorem}
\newtheorem{proposition}{Proposition}
\newtheorem{lemma}{Lemma}

\newtheorem{remark}{Remark}

\title{Generative Diffusion Model for Risk-Neutral Derivative Pricing}

\author{Nilay Tiwari}
\affil{Carnegie Mellon University \\ {\tt nilayt@andrew.cmu.edu}}

\date{\today}

\begin{document}

\maketitle

\begin{abstract}
Denoising diffusion probabilistic models (DDPMs) have emerged as powerful generative models for complex distributions, yet their use in arbitrage-free derivative pricing remains largely unexplored. Financial asset prices are naturally modeled by stochastic differential equations (SDEs), whose forward and reverse density evolution closely parallels the forward noising and reverse denoising structure of diffusion models.

In this paper, we develop a framework for using DDPMs to generate risk-neutral asset price dynamics for derivative valuation. Starting from log-return dynamics under the physical measure, we analyze the associated forward diffusion and derive the reverse-time SDE. We show that the change of measure from the physical to the risk-neutral measure induces an additive shift in the score function, which translates into a closed-form \emph{risk-neutral epsilon shift} in the DDPM reverse dynamics. This correction enforces the risk-neutral drift while preserving the learned variance and higher-order structure, yielding an explicit bridge between diffusion-based generative modeling and classical risk-neutral SDE-based pricing.

We show that the resulting discounted price paths satisfy the martingale condition under the risk-neutral measure. Empirically, the method reproduces the risk-neutral terminal distribution and accurately prices both European and path-dependent derivatives, including arithmetic Asian options, under a GBM benchmark. These results demonstrate that diffusion-based generative models provide a flexible and principled approach to simulation-based derivative pricing.
\end{abstract}
\section{Introduction}

Diffusion-based generative models, including denoising diffusion probabilistic models (DDPMs) and score-based diffusion models, have recently emerged as powerful tools for learning and sampling from complex high-dimensional distributions \cite{ho2020denoising,song2021scorebased}. In their continuous-time formulation, these models represent data distributions as solutions to stochastic differential equations (SDEs) and generate samples via reverse-time diffusion dynamics driven by learned score functions \cite{song2021scorebased}. This SDE-based perspective makes diffusion models particularly natural for quantitative finance, where asset prices are modeled as stochastic processes.

A central challenge in derivative pricing is the construction of dynamics under the risk-neutral measure \(Q\), under which discounted asset prices must form martingales and the drift is fixed by the risk-free rate \cite{shreve2004stochastic}. Modern machine learning approaches to pricing and hedging - including deep hedging, generative models for return distributions, and neural SDEs \cite{liu2024ddpm_synthetic_paths}, are typically trained on historical data under the physical measure. Incorporating the risk-neutral structure into such models therefore requires either calibration to option prices or the imposition of additional constraints.

For diffusion-based models, this issue is particularly subtle. Training is performed under the physical measure, and the learned score or noise predictor corresponds to the diffused physical-measure distribution. A change of measure does not act linearly on score functions, making it nontrivial to impose the risk-neutral drift constraint directly within the reverse-time dynamics. This raises a fundamental question:

\begin{quote}
\emph{How can one modify the reverse-time dynamics of a diffusion model trained under the physical measure so that the resulting generated price paths satisfy the risk-neutral martingale condition?}
\end{quote}

In this paper, we address this question by working directly with the underlying SDEs. Starting from log-price dynamics under the physical measure, we derive the corresponding reverse-time SDE using Anderson's time-reversal theorem and the Fisher identity, and analyze its transformation under the risk-neutral measure \(Q\). This leads to a closed-form \emph{risk-neutral epsilon shift}: a modification of the reverse-time noise term that enforces the correct risk-neutral drift while preserving the variance and higher-order structure learned by the model. This construction provides an explicit bridge between diffusion-based generative modeling and classical risk-neutral SDE-based asset pricing.

We show that the resulting reverse dynamics generate price paths whose discounted process is an approximate martingale. In a geometric Brownian motion benchmark, the model reproduces the correct terminal distribution and produces European option prices that closely match Black--Scholes values. Section~\ref{sec:mathematical_framework} develops the mathematical framework, Section~\ref{sec:rn_ddpm} derives the risk-neutral reverse dynamics and the epsilon shift, Section~\ref{sec:experiments} presents empirical validation, and Section~\ref{sec:limitations} discusses limitations and future directions.
\section{Mathematical Framework}
\label{sec:mathematical_framework}

\subsection{Stochastic setting and physical-measure dynamics}
\label{subsec:physical_dynamics}

We work on a filtered probability space
\[
(\Omega, \mathcal F, (\mathcal F_t)_{t \ge 0}, \mathbb P),
\]
satisfying the usual conditions and supporting a one-dimensional standard Brownian motion
\((W_t)_{t \ge 0}\).
Let \(S_t\) denote the price of a traded underlying asset. Under the physical measure \(\mathbb P\) we assume that \(S_t\) follows a geometric Brownian motion (GBM) with constant drift \(\mu \in \mathbb R\) and volatility \(\sigma > 0\):
\begin{equation}
dS_t = \mu S_t \, dt + \sigma S_t \, dW_t, 
\qquad S_0 > 0.
\label{eq:gbm_physical}
\end{equation}

Define the log-price process \(X_t := \log S_t\). An application of It\^o's formula to \eqref{eq:gbm_physical} yields
\begin{equation}
dX_t = \Bigl( \mu - \tfrac{1}{2}\sigma^2 \Bigr) dt + \sigma \, dW_t.
\label{eq:logprice_sde_P}
\end{equation}
Fix a time step \(\Delta t > 0\) and consider the one-step log-return over \([t, t+\Delta t]\),
\[
Y_{t,\Delta t} := X_{t+\Delta t} - X_t.
\]
From \eqref{eq:logprice_sde_P}, we obtain under \(\mathbb P\) the Gaussian law
\begin{equation}
Y_{t,\Delta t} \sim \mathcal N\bigl(m_{\mathbb P}, v_0\bigr),
\qquad
m_{\mathbb P} := \Bigl( \mu - \tfrac{1}{2}\sigma^2 \Bigr)\Delta t,
\quad
v_0 := \sigma^2 \Delta t.
\label{eq:one_step_return_P}
\end{equation}
In particular, over a discrete grid \(t_h := h \Delta t\), \(h = 0,1,\dots,H\), the log-returns
\(
Y_h := X_{t_{h+1}} - X_{t_h}
\)
are i.i.d. Gaussian with mean \(m_{\mathbb P}\) and variance \(v_0\) under \(\mathbb P\). These one-step log-returns are the objects we seek to model with a diffusion-based generative framework.

\subsection{Risk-free asset and risk-neutral dynamics}
\label{subsec:risk_neutral_dynamics}

Let \(B_t\) denote the price of the money-market account evolving at the constant risk-free rate \(r \ge 0\),
\begin{equation}
B_t = e^{rt}.
\label{eq:money_market}
\end{equation}
In an arbitrage-free market, there exists an equivalent probability measure \(\mathbb Q\) such that the discounted asset price \(S_t / B_t\) is a martingale. For the geometric Brownian motion model \eqref{eq:gbm_physical}, this requirement uniquely specifies the drift of the asset under \(\mathbb Q\).

Under the risk-neutral measure \(\mathbb Q\), the SDE for \(S_t\) becomes
\begin{equation}
dS_t = r S_t \, dt + \sigma S_t \, dW_t^{\mathbb Q},
\qquad S_0 > 0,
\label{eq:gbm_rn}
\end{equation}
where \(W_t^{\mathbb Q}\) is a Brownian motion under \(\mathbb Q\). Thus the only change when passing from \(\mathbb P\) to \(\mathbb Q\) is the drift replacement \(\mu \mapsto r\).

The log-price process \(X_t = \log S_t\) then satisfies
\begin{equation}
dX_t = \Bigl( r - \tfrac{1}{2}\sigma^2 \Bigr) dt + \sigma \, dW_t^{\mathbb Q}.
\label{eq:logprice_sde_Q}
\end{equation}
Over a time interval \(\Delta t\), the one-step log-return under \(\mathbb Q\) is therefore
\begin{equation}
Y_{t,\Delta t} \sim \mathcal N\bigl(m_{\mathbb Q}, v_0\bigr),
\qquad
m_{\mathbb Q} := \Bigl( r - \tfrac{1}{2}\sigma^2 \Bigr)\Delta t,
\quad
v_0 = \sigma^2 \Delta t.
\label{eq:one_step_return_Q}
\end{equation}

For a discrete grid \(t_h = h\Delta t\), the process
\[
\frac{S_{t_h}}{e^{r t_h}}
\]
is a martingale under \(\mathbb Q\). This drift adjustment is the fundamental structural difference between \(\mathbb P\) and \(\mathbb Q\) and will be central when modifying the reverse-time diffusion dynamics to enforce risk-neutrality in the generative model.

\subsection{Forward diffusion and the Fokker-Planck equation}
\label{subsec:fokker_planck}

To connect asset-price dynamics with diffusion-based generative models, it is convenient to work with a general one-dimensional It\^o diffusion of the form
\begin{equation}
dZ_t = b(Z_t,t)\,dt + \sigma(Z_t,t)\,dW_t,
\label{eq:general_forward_sde}
\end{equation}
where \(b\) is the drift, \(\sigma > 0\) is the diffusion coefficient, and \(W_t\) is a Brownian motion under the relevant measure.  
The process \(Z_t\) may represent the log-price \(X_t\), a transformed state variable, or an abstract latent variable used in a generative model.

Let \(p_t(z)\) denote the probability density function of \(Z_t\).  
The evolution of \(p_t\) is governed by the Fokker-Planck (or forward Kolmogorov) equation:
\begin{equation}
\partial_t p_t(z)
=
-\partial_z \bigl( b(z,t)\, p_t(z) \bigr)
+
\tfrac12 \partial_{zz} \bigl( \sigma^2(z,t)\, p_t(z) \bigr).
\label{eq:fokker_planck}
\end{equation}
Equation \eqref{eq:fokker_planck} describes how the forward SDE \eqref{eq:general_forward_sde} transports probability mass over time.  
It characterizes the family of intermediate densities \((p_t)_{t\in[0,T]}\) obtained by evolving the data distribution through the forward dynamics.

The Fokker-Planck equation \eqref{eq:fokker_planck} describes how the marginal density \(p_t\) of the diffusion \eqref{eq:general_forward_sde} evolves forward in time. A key result of Anderson~\cite{anderson1982reverse} is that this forward evolution admits a reverse-time description: there exists an SDE, run backward in time, whose marginal densities follow the same family \((p_t)_{t \in [0,T]}\) in reverse order. In the next subsection we recall this reverse-time SDE, which forms the basis of diffusion-model sampling.

\subsection{Reverse-time SDE}
\label{subsec:reverse_time_sde}

Consider the forward diffusion \eqref{eq:general_forward_sde} with drift \(b\) and diffusion coefficient \(\sigma\), and let \(p_t(z)\) denote the marginal density of \(Z_t\).~\cite{anderson1982reverse} showed that, under suitable regularity conditions, the time-reversed process
\[
\hat Z_t := Z_{T-t}, \qquad t \in [0,T],
\]
is itself a diffusion. More precisely, \(\hat Z_t\) satisfies the reverse-time SDE
\begin{equation}
d\hat Z_t = \tilde b(\hat Z_t, T-t)\,dt + \sigma(\hat Z_t, T-t)\,d\hat W_t,
\label{eq:reverse_sde_general}
\end{equation}
where \(\hat W_t\) is a Brownian motion under the reversed filtration, and the reverse drift \(\tilde b\) is given by
\begin{equation}
\tilde b(z,t)
=
b(z,t) - \sigma^2(z,t)\,\nabla_z \log p_t(z).
\label{eq:reverse_drift_general}
\end{equation}

Equation \eqref{eq:reverse_sde_general} together with \eqref{eq:reverse_drift_general} characterizes the unique diffusion process whose marginal densities evolve as \((p_t)_{t\in[0,T]}\) but in reverse temporal order. The additional term 
\(
-\sigma^2\nabla_z \log p_t
\)
corrects the forward drift to ensure consistency with the backward density evolution specified by the Fokker--Planck equation~\eqref{eq:fokker_planck}.

This reverse-time SDE is the continuous-time foundation of diffusion-based generative sampling: if one can approximate the score function \(\nabla_z \log p_t(z)\), then simulating \eqref{eq:reverse_sde_general} backward from a terminal distribution yields samples from the data distribution at time \(t=0\). The next subsection recalls how this score term can be expressed in a denoising form suitable for learning.


\subsection{Score function and its role in the reverse dynamics}
\label{subsec:score_function}

For each \(t \in [0,T]\), let \(p_t(z)\) denote the density of the diffusion \(Z_t\) in \eqref{eq:general_forward_sde}. The \emph{score function} at time \(t\) is defined as
\[
s_t(z) := \nabla_z \log p_t(z).
\]
The score encodes local information about how probability mass is arranged at time \(t\), and it appears explicitly in the reverse drift \eqref{eq:reverse_drift_general} through the term \(-\sigma^2(z,t)\,s_t(z)\).

If the family of score functions \(\{s_t\}_{t \in [0,T]}\) were known, then the reverse-time SDE \eqref{eq:reverse_sde_general} would be fully specified, and simulating it backward from a suitable terminal distribution would reproduce the forward marginals \((p_t)_{t\in[0,T]}\) in reverse order. In particular, running the reverse SDE from \(t = T\) down to \(t = 0\) would yield samples approximately distributed according to the data distribution \(p_0\).

In practice, the densities \(p_t\) and their scores \(s_t\) are not available in closed form. Diffusion-based generative models address this by training a neural network to approximate the score (or an equivalent quantity) at each time \(t\), using a suitable learning objective derived from the forward noising process. Substituting this learned approximation into \eqref{eq:reverse_drift_general} yields an approximate reverse-time dynamics that can be used as a sampler. The next subsection recalls how this is implemented in the discrete-time DDPM framework.


\subsection{Discrete DDPM approximation to the reverse SDE}
\label{subsec:ddpm_discrete}

Denoising diffusion probabilistic models (DDPMs) provide a tractable discrete-time approximation to the continuous-time reverse diffusion \eqref{eq:reverse_sde_general}.  
A DDPM specifies a forward noising process that gradually transforms a data sample into nearly Gaussian noise, together with a learned reverse process that approximately inverts this transformation.

\paragraph{Forward noising process.}
Fix noise parameters \( \{\beta_t\}_{t=1}^T \subset (0,1) \) and define
\[
\alpha_t := 1 - \beta_t,
\qquad
\bar\alpha_t := \prod_{s=1}^t \alpha_s.
\]
Starting from a data sample \(z_0\), the forward process generates latents \(z_1,\dots,z_T\) according to
\begin{equation}
z_t = \sqrt{\bar\alpha_t}\, z_0
      + \sqrt{1 - \bar\alpha_t}\, \varepsilon_t,
\qquad
\varepsilon_t \sim \mathcal N(0,1).
\label{eq:ddpm_forward}
\end{equation}
Hence each marginal is Gaussian:
\[
z_t \mid z_0 \sim
\mathcal N\bigl( \sqrt{\bar\alpha_t}\, z_0,\; (1-\bar\alpha_t)I \bigr).
\]

\paragraph{Gaussian posterior.}
Because the forward transitions are linear-Gaussian, the posterior distribution \(q(z_{t-1}\mid z_t)\) is also Gaussian.  
Using standard Gaussian conditioning identities (see \cite{ho2020denoising}), one obtains
\begin{equation}
q(z_{t-1} \mid z_t)
=
\mathcal N\!\bigl(
\mu_t(z_t, \varepsilon),\, \tilde\beta_t I
\bigr),
\label{eq:ddpm_posterior}
\end{equation}
where
\[
\tilde\beta_t
=
\frac{1 - \bar\alpha_{t-1}}{1 - \bar\alpha_t}\,\beta_t,
\qquad
\varepsilon
=
\frac{z_t - \sqrt{\bar\alpha_t} z_0}{\sqrt{1-\bar\alpha_t}}.
\]
Thus the posterior mean can be written as
\begin{equation}
\mu_t(z_t, \varepsilon) =
\frac{1}{\sqrt{\alpha_t}}
\left(
z_t - \frac{\beta_t}{\sqrt{1-\bar\alpha_t}}\, \varepsilon
\right).
\label{eq:ddpm_posterior_mean}
\end{equation}

\paragraph{Fisher identity and the score--noise relation.}
For the Gaussian forward kernel \eqref{eq:ddpm_forward}, the Fisher (or denoising) identity \cite{song2019generative} yields
\begin{equation}
\nabla_{z_t} \log p_t(z_t)
=
-\frac{1}{\sqrt{1 - \bar\alpha_t}}\,
\mathbb E\!\left[\,\varepsilon \mid z_t\,\right].
\label{eq:fisher_identity}
\end{equation}
Thus the score can be expressed as a conditional expectation of the injected noise.  
DDPMs exploit this by training a neural network \(\varepsilon_\theta(z_t,t)\) to approximate the conditional mean \( \mathbb E[\varepsilon \mid z_t] \), yielding the practical score approximation
\begin{equation}
s_t(z_t) = \nabla_{z_t} \log p_t(z_t)
\;\approx\;
-\frac{1}{\sqrt{1-\bar\alpha_t}}\,
\varepsilon_\theta(z_t,t).
\label{eq:noise_score_relation}
\end{equation}

\paragraph{Reverse DDPM update.}
Substituting the noise predictor into \eqref{eq:ddpm_posterior_mean} yields the practical reverse transition
\begin{equation}
z_{t-1}
=
\frac{1}{\sqrt{\alpha_t}}
\left(
z_t - \frac{\beta_t}{\sqrt{1-\bar\alpha_t}}\, \varepsilon_\theta(z_t,t)
\right)
+ \sqrt{\tilde\beta_t}\, \xi_t,
\qquad
\xi_t \sim \mathcal N(0,1).
\label{eq:ddpm_reverse_update}
\end{equation}
applied for \(t = T, T-1, \dots, 1\).

\paragraph{Connection to the reverse SDE.}
The update \eqref{eq:ddpm_reverse_update} constitutes a first-order Euler discretization of the reverse-time SDE \eqref{eq:reverse_sde_general}, with the learned score approximation \eqref{eq:noise_score_relation} replacing the exact score.  
Thus a DDPM sampler can be interpreted as an approximate numerical scheme for integrating the Anderson reverse diffusion.



\section{Risk-Neutral DDPM Reverse Dynamics}
\label{sec:rn_ddpm}

In this section we derive a modification of the DDPM reverse dynamics that enforces the
risk-neutral drift while preserving the variance and higher-order structure learned under
the physical measure. We work at the level of one-step log-returns, using the notation from
Sections~\ref{subsec:physical_dynamics}--\ref{subsec:ddpm_discrete}.

Recall that over a fixed calendar increment \(\Delta t\), the log-return
\(Y := X_{t+\Delta t} - X_t\) satisfies, under \(\mathbb P\),
\begin{equation}
Y \sim \mathcal N(m_{\mathbb P}, v_0),
\qquad
m_{\mathbb P} = \Bigl(\mu - \tfrac{1}{2}\sigma^2\Bigr)\Delta t,
\quad
v_0 = \sigma^2 \Delta t,
\label{eq:one_step_return_P_recall}
\end{equation}
while under the risk-neutral measure \(\mathbb Q\),
\begin{equation}
Y \sim \mathcal N(m_{\mathbb Q}, v_0),
\qquad
m_{\mathbb Q} = \Bigl(r - \tfrac{1}{2}\sigma^2\Bigr)\Delta t.
\label{eq:one_step_return_Q_recall}
\end{equation}
Thus the change of measure from \(\mathbb P\) to \(\mathbb Q\) modifies only the mean of
the one-step log-return, leaving the variance \(v_0\) unchanged.

\subsection{Forward DDPM marginals under \texorpdfstring{\(\mathbb P\)}{P} and \texorpdfstring{\(\mathbb Q\)}{Q}}
\label{subsec:forward_marginals_P_Q}

We model the clean one-step log-return \(Y_0\) as Gaussian under both measures,
\begin{equation}
Y_0^{\mathbb P} \sim \mathcal N(m_{\mathbb P}, v_0),
\qquad
Y_0^{\mathbb Q} \sim \mathcal N(m_{\mathbb Q}, v_0),
\label{eq:Y0_P_Q}
\end{equation}
with \(m_{\mathbb P}, m_{\mathbb Q}, v_0\) as in
\eqref{eq:one_step_return_P_recall}--\eqref{eq:one_step_return_Q_recall}.
At diffusion time index \(t \in \{1,\dots,T\}\), the DDPM forward marginal
\eqref{eq:ddpm_forward} reads
\begin{equation}
Y_t = \sqrt{\bar\alpha_t}\, Y_0 + \sqrt{1-\bar\alpha_t}\,\varepsilon_t,
\qquad
\varepsilon_t \sim \mathcal N(0,1),
\label{eq:forward_marginal_y}
\end{equation}
where \(Y_0\) is distributed according to \(\mathbb P\) or \(\mathbb Q\).
The next lemma describes the induced marginals.

\begin{lemma}[Forward DDPM marginals under \(\mathbb P\) and \(\mathbb Q\)]
\label{lem:forward_marginals_P_Q}
Let \(Y_0\) be Gaussian as in \eqref{eq:Y0_P_Q}, and let \(Y_t\) be defined by
\eqref{eq:forward_marginal_y} with \(\varepsilon_t\) independent of \(Y_0\).
Then, under \(\mathbb P\) and \(\mathbb Q\), we have
\begin{equation}
Y_t^{\mathbb P} \sim \mathcal N(\mu_t^{\mathbb P}, \sigma_t^2),
\qquad
Y_t^{\mathbb Q} \sim \mathcal N(\mu_t^{\mathbb Q}, \sigma_t^2),
\label{eq:Yt_P_Q_gaussian}
\end{equation}
where
\begin{equation}
\mu_t^{\mathbb P} = \sqrt{\bar\alpha_t}\, m_{\mathbb P},
\qquad
\mu_t^{\mathbb Q} = \sqrt{\bar\alpha_t}\, m_{\mathbb Q},
\qquad
\sigma_t^2 = \bar\alpha_t v_0 + (1-\bar\alpha_t).
\label{eq:Yt_means_vars}
\end{equation}
\end{lemma}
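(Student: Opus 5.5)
The plan is to use the fact that an affine combination of independent Gaussian random variables is again Gaussian, and then to compute its mean and variance directly. I will fix one of the two measures, say \(\mathbb P\), and do all the work there. Under \(\mathbb Q\) the argument is word for word the same, with \(m_{\mathbb P}\) replaced by \(m_{\mathbb Q}\). This substitution is legitimate because \eqref{eq:Y0_P_Q} assigns the same variance \(v_0\) under both measures, and because the forward noising law \eqref{eq:forward_marginal_y} does not depend on the measure.

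\textbf{Step 1: Gaussianity.} Since \(Y_0\) and \(\varepsilon_t\) are independent Gaussians, the pair \((Y_0,\varepsilon_t)\) is jointly Gaussian. Hence \(Y_t=\sqrt{\bar\alpha_t}\,Y_0+\sqrt{1-\bar\alpha_t}\,\varepsilon_t\), being a linear functional of this pair, is Gaussian. If a self-contained argument is wanted, I would instead use characteristic functions. Independence factorizes \(\mathbb E[e^{iuY_t}]\) into \(\exp\bigl(iu\sqrt{\bar\alpha_t}m_{\mathbb P}-\tfrac12 u^2\bar\alpha_t v_0\bigr)\) times \(\exp\bigl(-\tfrac12 u^2(1-\bar\alpha_t)\bigr)\). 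The product is exactly the characteristic function of \(\mathcal N\bigl(\sqrt{\bar\alpha_t}m_{\mathbb P},\,\bar\alpha_t v_0+1-\bar\alpha_t\bigr)\), and this route gives Steps 1 and 2 simultaneously.

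\textbf{Step 2: Moments.} The mean follows from linearity of expectation together with \(\mathbb E[\varepsilon_t]=0\), which gives \(\mu_t^{\mathbb P}=\sqrt{\bar\alpha_t}\,m_{\mathbb P}\). For the variance, independence kills the cross term, so
\[
\operatorname{Var}(Y_t)=\bar\alpha_t\operatorname{Var}(Y_0)+(1-\bar\alpha_t)\operatorname{Var}(\varepsilon_t)=\bar\alpha_t v_0+(1-\bar\alpha_t)=\sigma_t^2 .
\]
This value does not involve the drift, so it is the same under both measures. That is the point the lemma is recording.

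\textbf{Expected obstacle.} There is no substantive obstacle here. The only care needed is interpretive: "under \(\mathbb Q\)" must be read as \(Y_0\) having its \(\mathbb Q\)-law while the noise \(\varepsilon_t\) remains standard normal and independent of \(Y_0\). I would state this explicitly as an assumption of the forward DDPM construction, namely that the noising process is artificial and measure-independent. Once that is said, the computation is routine.
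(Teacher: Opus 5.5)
Your proposal is correct and follows essentially the same route as the paper: Gaussianity from linearity of \eqref{eq:forward_marginal_y} plus independence, then the mean via \(\mathbb E[\varepsilon_t]=0\) and the variance via independence removing the cross term. The characteristic-function variant, and your explicit remark that the noising kernel does not depend on the measure, are small additions that make the argument slightly more self-contained.
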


\begin{proof}
Since \(Y_0\) and \(\varepsilon_t\) are independent and \eqref{eq:forward_marginal_y}
is linear, \(Y_t\) is Gaussian. Under \(\mathbb P\),
\[
\mathbb E_{\mathbb P}[Y_t]
=
\sqrt{\bar\alpha_t}\,\mathbb E_{\mathbb P}[Y_0]
+
\sqrt{1-\bar\alpha_t}\,\mathbb E[\varepsilon_t]
=
\sqrt{\bar\alpha_t}\,m_{\mathbb P},
\]
and similarly under \(\mathbb Q\),
\[
\mathbb E_{\mathbb Q}[Y_t]
=
\sqrt{\bar\alpha_t}\,m_{\mathbb Q}.
\]
In both cases,
\[
\operatorname{Var}(Y_t)
=
\bar\alpha_t\,\operatorname{Var}(Y_0)
+
(1-\bar\alpha_t)\,\operatorname{Var}(\varepsilon_t)
=
\bar\alpha_t v_0 + (1-\bar\alpha_t),
\]
which proves \eqref{eq:Yt_P_Q_gaussian}--\eqref{eq:Yt_means_vars}.
\end{proof}

Thus the DDPM forward process transports the difference in one-step means
\(m_{\mathbb Q} - m_{\mathbb P}\) into a family of time-dependent mean differences
\(\mu_t^{\mathbb Q} - \mu_t^{\mathbb P} = \sqrt{\bar\alpha_t}(m_{\mathbb Q} - m_{\mathbb P})\),
with a common variance \(\sigma_t^2\).

\subsection{Gaussian score shift under change of drift}
\label{subsec:gaussian_score_shift}

We now compute how the score of the DDPM forward marginal changes when passing from
\(\mathbb P\) to \(\mathbb Q\).

\begin{proposition}[Gaussian score shift]
\label{prop:gaussian_score_shift}
Let \(Y_t^{\mathbb P} \sim \mathcal N(\mu_t^{\mathbb P}, \sigma_t^2)\) and
\(Y_t^{\mathbb Q} \sim \mathcal N(\mu_t^{\mathbb Q}, \sigma_t^2)\) as in
Lemma~\ref{lem:forward_marginals_P_Q}. Denote the corresponding scores by
\[
s_t^{\mathbb P}(y) := \nabla_y \log p_t^{\mathbb P}(y),
\qquad
s_t^{\mathbb Q}(y) := \nabla_y \log p_t^{\mathbb Q}(y).
\]
Then
\begin{equation}
s_t^{\mathbb Q}(y)
=
s_t^{\mathbb P}(y) + \eta_t,
\qquad
\eta_t
=
\frac{\sqrt{\bar\alpha_t}}{\sigma_t^2}\bigl(m_{\mathbb Q} - m_{\mathbb P}\bigr),
\label{eq:score_shift_eta}
\end{equation}
for all \(y \in \mathbb R\).
\end{proposition}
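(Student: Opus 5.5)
The plan is a direct computation from the explicit Gaussian densities supplied by Lemma~\ref{lem:forward_marginals_P_Q}. That lemma gives both marginals \(Y_t^{\mathbb P}\) and \(Y_t^{\mathbb Q}\) as one-dimensional Gaussians with the \emph{same} variance \(\sigma_t^2 = \bar\alpha_t v_0 + (1-\bar\alpha_t) > 0\). Because of this, the log-densities differ only through their means:
\[
\log p_t^{\mathbb P}(y) = -\frac{(y-\mu_t^{\mathbb P})^2}{2\sigma_t^2} - \tfrac12\log(2\pi\sigma_t^2),
\qquad
\log p_t^{\mathbb Q}(y) = -\frac{(y-\mu_t^{\mathbb Q})^2}{2\sigma_t^2} - \tfrac12\log(2\pi\sigma_t^2).
\]
Here I note that \(\sigma_t^2>0\) because \(v_0>0\) and \(\bar\alpha_t\in(0,1)\). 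The densities are therefore smooth and everywhere positive, so the scores are well defined on all of \(\mathbb R\).

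Next, I differentiate in \(y\). The normalizing constants vanish, and this gives the linear scores
\[
s_t^{\mathbb P}(y) = -\frac{y-\mu_t^{\mathbb P}}{\sigma_t^2},
\qquad
s_t^{\mathbb Q}(y) = -\frac{y-\mu_t^{\mathbb Q}}{\sigma_t^2}.
\]
Subtracting the two expressions, the terms in \(y\) cancel exactly because the variances coincide. This leaves
\[
s_t^{\mathbb Q}(y)-s_t^{\mathbb P}(y)=\frac{\mu_t^{\mathbb Q}-\mu_t^{\mathbb P}}{\sigma_t^2}.
\]
The difference is a constant independent of \(y\).

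Finally, I substitute \(\mu_t^{\mathbb Q}-\mu_t^{\mathbb P}=\sqrt{\bar\alpha_t}(m_{\mathbb Q}-m_{\mathbb P})\) from \eqref{eq:Yt_means_vars}. This yields \(\eta_t\) exactly as in \eqref{eq:score_shift_eta}, and the identity holds for all \(y\in\mathbb R\).

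There is no real obstacle here. The only point that needs attention is the equal-variance property, since that is what makes the shift additive and independent of \(y\); with unequal variances a \(y\)-dependent term would survive. The equal-variance property is guaranteed by Lemma~\ref{lem:forward_marginals_P_Q}, because the change of measure leaves \(v_0\) unchanged.
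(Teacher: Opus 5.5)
Your proposal is correct and follows essentially the same route as the paper: write the common-variance Gaussian log-densities, differentiate to get the linear scores \(-(y-\mu)/\sigma_t^2\), subtract so the \(y\)-terms cancel, and substitute \(\mu_t^{\mathbb Q}-\mu_t^{\mathbb P}=\sqrt{\bar\alpha_t}(m_{\mathbb Q}-m_{\mathbb P})\). Your extra remarks are accurate but not in the paper: that \(\sigma_t^2>0\) makes the scores well defined everywhere, and that equal variances are exactly what makes the shift \(y\)-independent.
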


\begin{proof}
For a one-dimensional Gaussian \(\mathcal N(\mu,\sigma^2)\) we have
\[
\log p(y) = -\tfrac12\log(2\pi\sigma^2)
            - \tfrac{1}{2\sigma^2}(y-\mu)^2,
\]
hence
\[
\nabla_y \log p(y)
= -\frac{y-\mu}{\sigma^2}
= -\frac{1}{\sigma^2}y + \frac{\mu}{\sigma^2}.
\]
Applying this with \(\mu = \mu_t^{\mathbb P}, \mu_t^{\mathbb Q}\) and
\(\sigma^2 = \sigma_t^2\) gives
\[
s_t^{\mathbb P}(y)
=
-\frac{1}{\sigma_t^2}y + \frac{\mu_t^{\mathbb P}}{\sigma_t^2},
\qquad
s_t^{\mathbb Q}(y)
=
-\frac{1}{\sigma_t^2}y + \frac{\mu_t^{\mathbb Q}}{\sigma_t^2}.
\]
Subtracting yields
\[
s_t^{\mathbb Q}(y) - s_t^{\mathbb P}(y)
=
\frac{\mu_t^{\mathbb Q} - \mu_t^{\mathbb P}}{\sigma_t^2}
=
\frac{\sqrt{\bar\alpha_t}}{\sigma_t^2}(m_{\mathbb Q} - m_{\mathbb P}),
\]
by \eqref{eq:Yt_means_vars}, which proves \eqref{eq:score_shift_eta}.
\end{proof}

In particular, the effect of replacing \(m_{\mathbb P}\) by \(m_{\mathbb Q}\) in the clean
log-return is to add a \emph{constant} offset \(\eta_t\) to the score at each diffusion
time \(t\).

\subsection{Risk-neutral epsilon shift via Fisher identity}
\label{subsec:rn_epsilon_shift}

We now translate the score shift \eqref{eq:score_shift_eta} into a modification of the
DDPM noise predictor via the Fisher identity introduced in
Section~\ref{subsec:ddpm_discrete}.

For the Gaussian forward kernel \eqref{eq:ddpm_forward}, the Fisher identity
\eqref{eq:fisher_identity} gives
\begin{equation}
s_t^{\mathbb P}(y)
=
\nabla_y \log p_t^{\mathbb P}(y)
=
-\frac{1}{\sqrt{1-\bar\alpha_t}}
\,
\mathbb E_{\mathbb P}\bigl[\varepsilon_t \mid Y_t = y\bigr].
\label{eq:fisher_identity_P_again}
\end{equation}
DDPMs approximate the conditional mean \(\mathbb E_{\mathbb P}[\varepsilon_t \mid Y_t = y]\)
by a neural network \(\varepsilon_\theta(y,t)\), so that
\begin{equation}
s_t^{\mathbb P}(y)
\approx
-\frac{1}{\sqrt{1-\bar\alpha_t}}\,\varepsilon_\theta(y,t).
\label{eq:score_noise_relation_P_again}
\end{equation}

Under the risk-neutral measure, Proposition~\ref{prop:gaussian_score_shift} shows that
the score satisfies
\begin{equation}
s_t^{\mathbb Q}(y)
=
s_t^{\mathbb P}(y) + \eta_t,
\qquad
\eta_t = \frac{\sqrt{\bar\alpha_t}}{\sigma_t^2}(m_{\mathbb Q} - m_{\mathbb P}).
\label{eq:score_shift_identity_again}
\end{equation}
The following proposition identifies the corresponding modification of the noise predictor.

\begin{proposition}[Risk-neutral epsilon shift]
\label{prop:rn_epsilon_shift}
Define
\begin{equation}
\delta_t
:=
\eta_t \sqrt{1-\bar\alpha_t}
=
\frac{\sqrt{\bar\alpha_t(1-\bar\alpha_t)}}{\sigma_t^2}
\bigl(m_{\mathbb Q} - m_{\mathbb P}\bigr),
\label{eq:delta_def}
\end{equation}
and set
\begin{equation}
\varepsilon_\theta^{\mathbb Q}(y,t)
:=
\varepsilon_\theta(y,t) - \delta_t.
\label{eq:epsilon_shift_def}
\end{equation}
Then the induced score approximation under \(\mathbb Q\) satisfies
\begin{equation}
s_t^{\mathbb Q}(y)
\approx
-\frac{1}{\sqrt{1-\bar\alpha_t}}\,
\varepsilon_\theta^{\mathbb Q}(y,t).
\label{eq:score_approx_Q}
\end{equation}
\end{proposition}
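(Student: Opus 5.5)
The plan is a direct substitution: combine the score shift of Proposition~\ref{prop:gaussian_score_shift} with the physical-measure noise--score relation \eqref{eq:score_noise_relation_P_again}, then absorb the constant shift $\eta_t$ into the noise predictor. The only design choice is the sign and scaling of $\delta_t$. It is fixed by requiring that $-\tfrac{1}{\sqrt{1-\bar\alpha_t}}\delta_t$ play the role of $-\eta_t$, so that subtracting $\delta_t$ from $\varepsilon_\theta$ adds $\eta_t$ to the score.

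\textbf{Step 1.} Start from \eqref{eq:score_shift_identity_again}, namely $s_t^{\mathbb Q}(y)=s_t^{\mathbb P}(y)+\eta_t$. This is exact for all $y$, since both forward marginals are Gaussian with common variance $\sigma_t^2$ by Lemma~\ref{lem:forward_marginals_P_Q}.

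\textbf{Step 2.} Substitute the approximation \eqref{eq:score_noise_relation_P_again} for $s_t^{\mathbb P}(y)$. This gives
\[
s_t^{\mathbb Q}(y)\approx -\frac{1}{\sqrt{1-\bar\alpha_t}}\,\varepsilon_\theta(y,t)+\eta_t .
\]

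\textbf{Step 3.} Rewrite the constant as $\eta_t=\tfrac{1}{\sqrt{1-\bar\alpha_t}}\,\delta_t$, which holds by the definition \eqref{eq:delta_def}; note $\bar\alpha_t<1$ because every $\beta_s\in(0,1)$. Factoring then yields
\[
s_t^{\mathbb Q}(y)\approx -\frac{1}{\sqrt{1-\bar\alpha_t}}\bigl(\varepsilon_\theta(y,t)-\delta_t\bigr)=-\frac{1}{\sqrt{1-\bar\alpha_t}}\,\varepsilon^{\mathbb Q}_\theta(y,t),
\]
which is \eqref{eq:score_approx_Q}. The closed form of $\delta_t$ in \eqref{eq:delta_def} follows immediately by inserting the expression for $\eta_t$ from \eqref{eq:score_shift_eta}.

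\textbf{Main obstacle.} The algebra is routine; the delicate point is the meaning of ``$\approx$''. I would make it precise by noting that adding a constant preserves the approximation error exactly:
\[
\Bigl|s_t^{\mathbb Q}(y)+\tfrac{1}{\sqrt{1-\bar\alpha_t}}\,\varepsilon^{\mathbb Q}_\theta(y,t)\Bigr| = \Bigl|s_t^{\mathbb P}(y)+\tfrac{1}{\sqrt{1-\bar\alpha_t}}\,\varepsilon_\theta(y,t)\Bigr| .
\]
Hence any pointwise or uniform error bound on the $\mathbb P$-score approximation transfers verbatim to the $\mathbb Q$-score. For an $L^2$ bound, however, the weighting measure changes from $p_t^{\mathbb P}$ to $p_t^{\mathbb Q}$. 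I would either note that the two Gaussians are mutually equivalent with a bounded-moment likelihood ratio, or simply state the result in the pointwise sense.

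Equivalently, one can check the result via the Fisher identity under $\mathbb Q$: $\mathbb E_{\mathbb Q}[\varepsilon_t\mid Y_t=y]=\mathbb E_{\mathbb P}[\varepsilon_t\mid Y_t=y]-\delta_t$. This follows from the Gaussian conditional mean formula, since both conditional means are affine in $y$ with the same slope and intercepts differing by $\delta_t$. It confirms that $\varepsilon^{\mathbb Q}_\theta$ is the natural target for the $\mathbb Q$-noise predictor.
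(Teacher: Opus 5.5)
Your proposal is correct and follows the paper's proof: you substitute the noise--score approximation for $s_t^{\mathbb P}$ into the exact Gaussian shift $s_t^{\mathbb Q}=s_t^{\mathbb P}+\eta_t$, then absorb $\eta_t=\delta_t/\sqrt{1-\bar\alpha_t}$ into the shifted predictor $\varepsilon_\theta-\delta_t$. Your added observation, that the constant shift leaves the pointwise approximation error unchanged, sharpens the paper's informal ``$\approx$'' but is not needed for the argument.
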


\begin{proof}
Starting from \eqref{eq:score_noise_relation_P_again} and
\eqref{eq:score_shift_identity_again}, we have
\[
s_t^{\mathbb Q}(y)
=
s_t^{\mathbb P}(y) + \eta_t
\approx
-\frac{1}{\sqrt{1-\bar\alpha_t}}\,\varepsilon_\theta(y,t)
+ \eta_t.
\]
Define \(\varepsilon_\theta^{\mathbb Q}\) by \eqref{eq:epsilon_shift_def}. Then
\[
-\frac{1}{\sqrt{1-\bar\alpha_t}}\,
\varepsilon_\theta^{\mathbb Q}(y,t)
=
-\frac{1}{\sqrt{1-\bar\alpha_t}}\,
\bigl(\varepsilon_\theta(y,t) - \delta_t\bigr)
=
-\frac{1}{\sqrt{1-\bar\alpha_t}}\,\varepsilon_\theta(y,t)
+ \frac{\delta_t}{\sqrt{1-\bar\alpha_t}}.
\]
Choosing \(\delta_t\) as in \eqref{eq:delta_def} gives
\[
\frac{\delta_t}{\sqrt{1-\bar\alpha_t}} = \eta_t,
\]
so that
\[
-\frac{1}{\sqrt{1-\bar\alpha_t}}\,
\varepsilon_\theta^{\mathbb Q}(y,t)
=
-\frac{1}{\sqrt{1-\bar\alpha_t}}\,\varepsilon_\theta(y,t) + \eta_t
\approx s_t^{\mathbb Q}(y),
\]
which proves \eqref{eq:score_approx_Q}.
\end{proof}

In the common standardized setting where \(v_0 = 1\) and the schedule is chosen so that
\(\sigma_t^2 = 1\), the expressions simplify to
\[
\eta_t = \sqrt{\bar\alpha_t}(m_{\mathbb Q} - m_{\mathbb P}),
\qquad
\delta_t = \sqrt{\bar\alpha_t(1-\bar\alpha_t)}\,(m_{\mathbb Q} - m_{\mathbb P}).
\]

\subsection{Martingale property of discounted DDPM price paths}
\label{subsec:martingale_ddpm}

We now connect the modified score and epsilon shift to the risk-neutral martingale
property for the asset price. Consider a discrete calendar grid \(t_h = h\Delta t\),
\(h = 0,1,\dots,H\), and construct log-prices by summing one-step log-returns
\[
X_{t_{h+1}} = X_{t_h} + Y_h,
\qquad
S_{t_h} = e^{X_{t_h}}.
\]

\begin{lemma}[Risk-neutral one-step drift and martingale condition]
\label{lem:rn_drift_martingale}
Suppose that under \(\mathbb Q\) each one-step log-return satisfies
\[
Y_h \sim \mathcal N(m_{\mathbb Q}, v_0),
\qquad
m_{\mathbb Q} = \Bigl(r - \tfrac{1}{2}\sigma^2\Bigr)\Delta t,
\quad
v_0 = \sigma^2 \Delta t,
\]
and that \((Y_h)_{h\ge0}\) is independent of \(\mathcal F_{t_h}\) given \(X_{t_h}\).
Then
\begin{equation}
\mathbb E_{\mathbb Q}\bigl[S_{t_{h+1}} \mid \mathcal F_{t_h}\bigr]
=
e^{r\Delta t} S_{t_h},
\label{eq:rn_martingale_step}
\end{equation}
so that \(\bigl(e^{-rt_h} S_{t_h}\bigr)_{h=0}^H\) is a martingale under \(\mathbb Q\).
\end{lemma}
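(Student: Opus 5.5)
The plan is to compute the one-step conditional expectation directly from the Gaussian law of \(Y_h\) and then iterate with the tower property. First, write \(S_{t_{h+1}} = S_{t_h}\, e^{Y_h}\), which follows from \(X_{t_{h+1}} = X_{t_h} + Y_h\) and \(S_{t_h} = e^{X_{t_h}}\). Since \(S_{t_h}\) is \(\mathcal F_{t_h}\)-measurable, we can pull it out:
\[
\mathbb E_{\mathbb Q}\bigl[S_{t_{h+1}} \mid \mathcal F_{t_h}\bigr] = S_{t_h}\,\mathbb E_{\mathbb Q}\bigl[e^{Y_h} \mid \mathcal F_{t_h}\bigr].
\]

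The key step, and the only one needing care, is to replace the conditional expectation by the unconditional moment generating function of \(Y_h\). The hypothesis only says that \(Y_h\) is independent of \(\mathcal F_{t_h}\) given \(X_{t_h}\). So I will first condition on \(X_{t_h}\). I will read the stated law \(Y_h \sim \mathcal N(m_{\mathbb Q}, v_0)\) as the conditional law given \(X_{t_h}\), which is the natural reading in the i.i.d. GBM setting of Section~\ref{subsec:risk_neutral_dynamics}. If it is instead only the marginal law, I will add the assumption that \(Y_h\) is independent of \(X_{t_h}\); together with the conditional independence, this makes \(Y_h\) independent of \(\mathcal F_{t_h}\). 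I expect this to be the main obstacle, since the lemma's hypothesis is slightly loosely phrased. Either reading gives
\[
\mathbb E_{\mathbb Q}\bigl[e^{Y_h}\mid \mathcal F_{t_h}\bigr] = \mathbb E_{\mathbb Q}[e^{Y_h}].
\]

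Next, apply the Gaussian moment generating function \(\mathbb E[e^{Y}] = e^{m + v/2}\) for \(Y\sim\mathcal N(m,v)\):
\[
\mathbb E_{\mathbb Q}[e^{Y_h}] = \exp\!\Bigl(\bigl(r - \tfrac12\sigma^2\bigr)\Delta t + \tfrac12\sigma^2\Delta t\Bigr) = e^{r\Delta t}.
\]
This yields \eqref{eq:rn_martingale_step}. Multiplying by \(e^{-r t_{h+1}}\) then gives
\[
\mathbb E_{\mathbb Q}\bigl[e^{-rt_{h+1}}S_{t_{h+1}}\mid\mathcal F_{t_h}\bigr] = e^{-rt_h}S_{t_h}.
\]

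Finally, integrability follows since \(\mathbb E_{\mathbb Q}[S_{t_h}] = S_0 e^{r t_h} < \infty\), which is obtained inductively from the same computation. Adaptedness of \(e^{-rt_h}S_{t_h}\) is immediate. The one-step identity then extends to all \(k>h\) by the tower property, which establishes the discrete-time martingale property on \(h=0,\dots,H\).
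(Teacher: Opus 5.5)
Your proposal is correct and follows essentially the same route as the paper: write $S_{t_{h+1}} = S_{t_h}e^{Y_h}$, pull out the $\mathcal F_{t_h}$-measurable factor, and use independence of $Y_h$ from $\mathcal F_{t_h}$ to reduce to $\mathbb E_{\mathbb Q}[e^{Y_h}] = e^{m_{\mathbb Q}+v_0/2} = e^{r\Delta t}$. The paper simply asserts that $Y_h$ is independent of $\mathcal F_{t_h}$. Your explicit handling of the loosely phrased conditional-independence hypothesis (which, read literally with only a marginal Gaussian law, would not suffice) and your integrability and tower-property steps fill in details the paper leaves implicit.
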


\begin{proof}
Conditional on \(\mathcal F_{t_h}\) the next log-return \(Y_h\) is independent of
\(\mathcal F_{t_h}\) and Gaussian as specified. Thus
\[
\mathbb E_{\mathbb Q}[S_{t_{h+1}} \mid \mathcal F_{t_h}]
=
\mathbb E_{\mathbb Q}\bigl[e^{X_{t_{h+1}}} \mid \mathcal F_{t_h}\bigr]
=
\mathbb E_{\mathbb Q}\bigl[e^{X_{t_h} + Y_h} \mid \mathcal F_{t_h}\bigr]
=
S_{t_h}\,\mathbb E_{\mathbb Q}[e^{Y_h}].
\]
Since \(Y_h \sim \mathcal N(m_{\mathbb Q}, v_0)\),
\[
\mathbb E_{\mathbb Q}[e^{Y_h}]
=
\exp\bigl(m_{\mathbb Q} + \tfrac{1}{2}v_0\bigr)
=
\exp\bigl((r - \tfrac{1}{2}\sigma^2)\Delta t + \tfrac{1}{2}\sigma^2\Delta t\bigr)
=
e^{r\Delta t}.
\]
Substituting into the previous expression yields \eqref{eq:rn_martingale_step}.
\end{proof}

\subsection{Risk-neutral DDPM reverse dynamics}
\label{subsec:rn_ddpm_dynamics}

Combining the epsilon shift and the DDPM reverse update
\eqref{eq:ddpm_reverse_update} leads to a risk-neutral variant of the reverse
dynamics. At diffusion time \(t\), we replace \(\varepsilon_\theta\) by
\(\varepsilon_\theta^{\mathbb Q}\) from \eqref{eq:epsilon_shift_def} to obtain
\begin{equation}
z_{t-1}
=
\frac{1}{\sqrt{\alpha_t}}
\left(
z_t - \frac{\beta_t}{\sqrt{1-\bar\alpha_t}}\,
\varepsilon_\theta^{\mathbb Q}(z_t,t)
\right)
+ \sqrt{\tilde\beta_t}\,\xi_t,
\qquad
\xi_t \sim \mathcal N(0,1),
\label{eq:ddpm_reverse_update_rn}
\end{equation}
with \(\delta_t\) as in \eqref{eq:delta_def}. Interpreting \(z_0\) as the one-step
log-return \(Y\) and summing over calendar steps as in Lemma~\ref{lem:rn_drift_martingale}
gives the following result.

\begin{theorem}[Risk-neutral DDPM sampler]
\label{thm:rn_ddpm_sampler}
Assume:
\begin{enumerate}
\item The DDPM trained under \(\mathbb P\) provides an accurate score approximation
      as in \eqref{eq:score_noise_relation_P_again}.
\item The one-step log-returns \(Y_h\) are sampled by interpreting \(z_0\) from
      \eqref{eq:ddpm_reverse_update_rn} as the log-return over \(\Delta t\) and summing
      these returns to form \(X_{t_h}\) and \(S_{t_h} = e^{X_{t_h}}\).
\end{enumerate}
Then the epsilon-shifted reverse dynamics
\eqref{eq:ddpm_reverse_update_rn} enforce the risk-neutral one-step drift
\eqref{eq:one_step_return_Q_recall} and variance \(v_0\), and the resulting price
process satisfies
\[
\mathbb E_{\mathbb Q}[e^{-rt_h} S_{t_h}] = S_0,
\qquad
h = 0,1,\dots,H,
\]
so that \((e^{-rt_h} S_{t_h})_{h=0}^H\) is a martingale under \(\mathbb Q\).
\end{theorem}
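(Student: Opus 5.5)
The plan is to reduce the theorem to Lemma~\ref{lem:rn_drift_martingale}. To do that, we show that under the two assumptions the terminal output \(z_0\) of the epsilon-shifted chain \eqref{eq:ddpm_reverse_update_rn} has law \(\mathcal N(m_{\mathbb Q}, v_0)\). The argument has three steps: identify the shifted reverse chain with the exact reverse chain for the \(\mathbb Q\)-data; read off the terminal law; then assemble the calendar-time path.

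\textbf{Step 1 (identification of the reverse chain).} By Proposition~\ref{prop:rn_epsilon_shift}, replacing \(\varepsilon_\theta\) by \(\varepsilon_\theta^{\mathbb Q} = \varepsilon_\theta - \delta_t\) turns the score used in the update into \(s_t^{\mathbb P} + \eta_t\). By Proposition~\ref{prop:gaussian_score_shift}, this equals \(s_t^{\mathbb Q}\) exactly, at every diffusion time \(t\) and every \(y\). Assumption 1 says \(\varepsilon_\theta\) is the \(\mathbb P\)-conditional noise mean. By the Fisher identity \eqref{eq:fisher_identity} applied under \(\mathbb Q\), it follows that \(\varepsilon_\theta^{\mathbb Q}(y,t) = \mathbb E_{\mathbb Q}[\varepsilon_t \mid Y_t = y]\). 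Substituting this into the posterior mean \eqref{eq:ddpm_posterior_mean}, the update \eqref{eq:ddpm_reverse_update_rn} becomes the ancestral sampler whose transitions are the Gaussian posteriors \(q(z_{t-1}\mid z_t)\) of the forward chain started from \(Y_0^{\mathbb Q}\).

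\textbf{Step 2 (terminal law).} Everything here is linear-Gaussian, so the claim can be checked directly. We track means and variances backward from \(z_T \sim \mathcal N(\mu_T^{\mathbb Q}, \sigma_T^2)\), using Lemma~\ref{lem:forward_marginals_P_Q}. By induction, each \(z_t\) is then distributed as \(\mathcal N(\mu_t^{\mathbb Q}, \sigma_t^2)\), which gives \(z_0 \sim \mathcal N(m_{\mathbb Q}, v_0)\). This is the claimed enforcement of the risk-neutral drift \eqref{eq:one_step_return_Q_recall} with variance \(v_0\) unchanged.

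The induction is the main obstacle. Two issues have to be handled there. First, the update \eqref{eq:ddpm_reverse_update_rn} only coincides with the exact posterior when the \(\mathbb Q\)-conditional noise mean is used and \(z_0\) is implicitly replaced by \(\mathbb E[z_0\mid z_t]\). One must therefore verify that the resulting Gaussian transition preserves the marginal family \((\mathcal N(\mu_t^{\mathbb Q},\sigma_t^2))_t\) and does not merely approximate it. Second, the initialization \(z_T\) is in practice \(\mathcal N(0,1)\) rather than \(\mathcal N(\mu_T^{\mathbb Q}, \sigma_T^2)\). I would handle this by noting \(\bar\alpha_T\approx 0\) and treating the discrepancy as the same approximation already implicit in Assumption 1. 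The statement should then be read as exact in the idealized limit.

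\textbf{Step 3 (assembly).} Each calendar step draws a fresh, independent run of the reverse chain, so Assumption 2 ensures the \(Y_h\) are i.i.d. \(\mathcal N(m_{\mathbb Q}, v_0)\) and independent of \(\mathcal F_{t_h}\). Lemma~\ref{lem:rn_drift_martingale} then gives \eqref{eq:rn_martingale_step}. Iterating with the tower property yields \(\mathbb E_{\mathbb Q}[e^{-rt_h}S_{t_h}] = S_0\) for all \(h\), and \((e^{-rt_h}S_{t_h})_{h=0}^H\) is a martingale. Integrability holds because \(e^{Y_h}\) is lognormal.
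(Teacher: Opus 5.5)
Your outline is the paper's. Propositions~\ref{prop:gaussian_score_shift} and~\ref{prop:rn_epsilon_shift} make $\varepsilon_\theta^{\mathbb Q}$ the exact $\mathbb Q$-conditional noise mean; your Step~1 derivation of $\varepsilon_\theta^{\mathbb Q}(y,t)=\mathbb E_{\mathbb Q}[\varepsilon_t\mid Y_t=y]$ is correct. The reverse chain is then taken to output $Y_0^{\mathbb Q}\sim\mathcal N(m_{\mathbb Q},v_0)$, and Lemma~\ref{lem:rn_drift_martingale} is iterated. The paper never verifies the middle step; it asserts it ``in the idealized setting of a perfectly trained DDPM.'' You instead try to derive it by a linear-Gaussian induction. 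That induction cannot be closed: it works for the mean but fails for the variance.

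Put $c_t:=\sqrt{\bar\alpha_{t-1}}\,\beta_t/(1-\bar\alpha_t)$. With $\varepsilon$ replaced by its $\mathbb Q$-conditional mean, the drift in \eqref{eq:ddpm_reverse_update_rn} is the mean of $q(z_{t-1}\mid z_t,z_0)$ at $z_0=\mathbb E_{\mathbb Q}[Y_0\mid Y_t=z_t]$, and $\tilde\beta_t$ is the variance of $q(z_{t-1}\mid z_t,z_0)$. The true reverse kernel $q(z_{t-1}\mid z_t)=\int q(z_{t-1}\mid z_t,z_0)\,q(z_0\mid z_t)\,dz_0$ has the same mean but variance $\tilde\beta_t+c_t^2\operatorname{Var}_{\mathbb Q}(Y_0\mid Y_t)$, where $\operatorname{Var}_{\mathbb Q}(Y_0\mid Y_t)=v_0(1-\bar\alpha_t)/\sigma_t^2>0$. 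For $v_0=1$ this sum is exactly $\beta_t$, which is why $\beta_t$, not $\tilde\beta_t$, is the correct reverse variance for Gaussian data.

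So your Step~1 identification with the ancestral sampler of $q(z_{t-1}\mid z_t)$ is false. The paper's \eqref{eq:ddpm_posterior} writes $q(z_{t-1}\mid z_t,z_0)$ as $q(z_{t-1}\mid z_t)$, which invites this confusion. In Step~2, if $z_t\sim\mathcal N(\mu_t^{\mathbb Q},\sigma_t^2)$, then $z_{t-1}$ has the right mean $\mu_{t-1}^{\mathbb Q}$, because the update is affine and $y\mapsto\mathbb E_{\mathbb Q}[Y_0\mid Y_t=y]$ averages to $m_{\mathbb Q}$. Its variance, however, is $\sigma_{t-1}^2-c_t^2v_0(1-\bar\alpha_t)/\sigma_t^2<\sigma_{t-1}^2$. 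The last step shows this most plainly: $\tilde\beta_1=0$, so the chain returns $z_0=\mathbb E_{\mathbb Q}[Y_0\mid Y_1=z_1]$, whose variance is $v_0-v_0\beta_1/\sigma_1^2<v_0$.

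This affects the martingale conclusion, not only the variance claim. Since $\mathbb E_{\mathbb Q}[e^{Y_h}]=\exp\bigl(m_{\mathbb Q}+\tfrac12\operatorname{Var}(Y_h)\bigr)$, a correct mean with a deficient variance gives $\mathbb E_{\mathbb Q}[e^{Y_h}]<e^{r\Delta t}$. Hence $\mathbb E_{\mathbb Q}[e^{-rt_h}S_{t_h}]=S_0$ fails by a small, schedule-dependent amount. To repair the argument you must do one of the following.
\begin{enumerate}
\item Follow the paper and make exact recovery of $\mathcal N(m_{\mathbb Q},v_0)$ part of the idealization. This is an extra hypothesis, not a consequence of Assumption~1.
\item Replace $\tilde\beta_t$ by the exact kernel variance $\tilde\beta_t+c_t^2v_0(1-\bar\alpha_t)/\sigma_t^2$. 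The chain is then the exact ancestral sampler of the $\mathbb Q$ forward chain, and your induction goes through verbatim.
\item Claim the result only in the continuous-time limit, where the per-step deficit, of order $\beta_t^2/(1-\bar\alpha_t)$, vanishes. This matches the introduction's ``approximate martingale.''
\end{enumerate}
Your handling of the initialization $z_T$ and Step~3 are fine.
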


\begin{proof}
By Proposition~\ref{prop:rn_epsilon_shift}, the modified noise predictor
\(\varepsilon_\theta^{\mathbb Q}\) induces a score approximation consistent with
the Gaussian forward marginal under \(\mathbb Q\), that is, consistent with
\(Y_t^{\mathbb Q} \sim \mathcal N(\mu_t^{\mathbb Q}, \sigma_t^2)\) as in
Lemma~\ref{lem:forward_marginals_P_Q}. In the idealized setting of a perfectly
trained DDPM, the reverse Markov chain \eqref{eq:ddpm_reverse_update_rn} recovers
the clean variable \(Y_0^{\mathbb Q} \sim \mathcal N(m_{\mathbb Q}, v_0)\) in
distribution. Interpreting \(Y_0^{\mathbb Q}\) as the one-step log-return \(Y_h\)
over \(\Delta t\), Lemma~\ref{lem:rn_drift_martingale} implies that the resulting
price process satisfies the risk-neutral martingale condition. The stated equality
for \(\mathbb E_{\mathbb Q}[e^{-rt_h} S_{t_h}]\) follows by iterating
\eqref{eq:rn_martingale_step}.
\end{proof}

\begin{remark}
The epsilon shift \(\delta_t\) in \eqref{eq:delta_def} modifies only the mean of
the reverse update \eqref{eq:ddpm_reverse_update_rn}, not the innovation variance
\(\tilde\beta_t\). Consequently, the variance and higher-order features of the
learned log-return distribution, such as skewness and kurtosis, are preserved while
the drift is adjusted to the risk-neutral value.
\end{remark}

\paragraph{Pathwise risk-neutral validity.}
Because the risk-neutral drift correction is implemented as an $\epsilon$-shift at every step of the reverse diffusion, the entire sequence of generated log-returns $\{Y_h\}_{h=1}^{H}$ inherits the correct risk-neutral dynamics. In particular, the induced price process constructed via
\[
    S_{t_0} = S_0, 
    \qquad 
    S_{t_h} = S_{t_{h-1}} \exp(Y_h)
\]
satisfies the discounted martingale property under the learned measure, not only at the terminal horizon but along the full simulated path. This guarantees that the DDPM output is suitable for pricing path-dependent claims such as the discrete arithmetic Asian options considered in Section~\ref{subsec:asian-options}.
\section{Experiments}
\label{sec:experiments}

This section validates the risk-neutral DDPM construction in a controlled GBM world.  
All experiments use the same DDPM trained on standardized one-day log returns under the physical measure \(P\).  
Unless stated otherwise:

\begin{itemize}
    \item spot \(S_0 = 100\),
    \item volatility \(\sigma = 0.2\),
    \item time step \(\Delta t = 1/252\),
    \item number of price steps \(H\) equals the number of trading days to maturity,
    \item number of simulated paths \(N = 1{,}000\).
\end{itemize}

The risk-neutral DDPM applies the epsilon shift from Section~\ref{sec:rn_ddpm} when sampling.  
The ``no shift'' DDPM reuses the trained network but does not modify the score, so it continues to follow the physical drift.

\subsection{Single maturity, single strike diagnostics}
\label{subsec:single_strike}

We first consider a one month maturity with \(H = 21\) steps and moderate gap between drift and risk free rate.  
Table~\ref{tab:single_strike_summary} reports summary statistics for an at the money call with strike \(K = S_0\).

\begin{table}[h]
\centering
\caption{Single-strike diagnostics for a one month maturity.  
RN-DDPM refers to sampling with the epsilon shift under the risk neutral measure \(Q\).  
GBM refers to the analytic or Monte Carlo reference in the GBM world.}
\label{tab:single_strike_summary}
\begin{tabular}{lcc}
\toprule
Quantity & RN-DDPM & GBM reference \\
\midrule
Mean one-step return & \(1.19\times 10^{-4}\) & \(m_Q\) \\
Std.\ of one-step returns & \(1.30\times 10^{-2}\) & \(\sigma \sqrt{\Delta t}\) \\
Discounted mean \(e^{-rT}\mathbb{E}_Q[S_T]\) & \(99.998\) & \(100\) \\
KS statistic (terminal log price) & \(0.021\) & Gaussian reference \\
KS \(p\)-value & \(0.76\) & \\
Call price \(C(K=S_0)\) & \(2.47 \pm 0.12\) & \(2.51\) (Black-Scholes) \\
Absolute pricing error & \(0.04\) & \\
\bottomrule
\end{tabular}
\end{table}

The RN-DDPM reproduces the one-step moments and the terminal distribution of the GBM quite accurately, and the call price matches the Black-Scholes value within a small fraction of the Monte Carlo standard error.

\subsection{Multi-strike pricing and implied volatility}
\label{subsec:iv_smile}

We next consider a strip of calls with maturity \(T = H \Delta t\) and strikes
\(
K \in \{0.8, 0.867, \dots, 1.2\} S_0.
\)
For each strike we compute

\begin{itemize}
    \item the RN-DDPM price,
    \item a GBM Monte Carlo price under \(Q\), and
    \item the analytic Black-Scholes price,
\end{itemize}

and invert each to implied volatility.  
Figure~\ref{fig:iv_smile_baseline} shows the implied volatility smile for this baseline configuration.

\begin{figure}[h]
\centering
\includegraphics[width=0.62\textwidth]{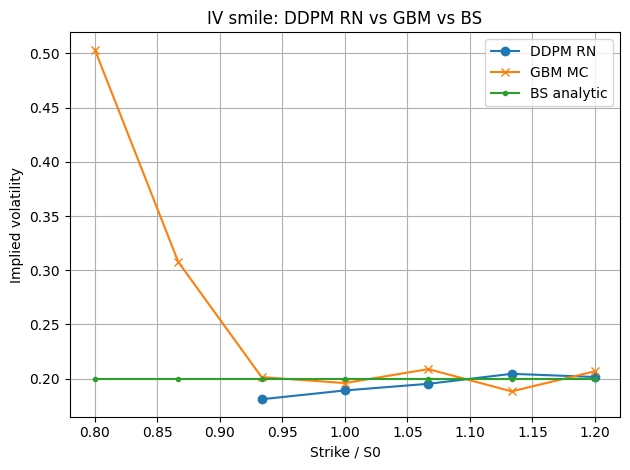}
\caption{Implied volatility smile for the baseline one month experiment.  
The green line is the flat Black-Scholes volatility used to define the GBM world, the orange markers are GBM Monte Carlo prices, and the blue markers are RN-DDPM prices.}
\label{fig:iv_smile_baseline}
\end{figure}

Table~\ref{tab:multistrike_baseline} makes the multi-strike comparison explicit for a representative run.

\begin{table}[h]
\centering
\caption{Multi-strike call prices and implied volatilities for the baseline configuration.  
Prices are in currency units, volatilities are annualized.}
\label{tab:multistrike_baseline}
\begin{tabular}{cccccc}
\toprule
\(K/S_0\) & \(C^{\mathrm{DDPM}}\) & \(C^{\mathrm{GBM}}\) & \(C^{\mathrm{BS}}\) & \(\hat\sigma_{\mathrm{DDPM}}\) & \(\hat\sigma_{\mathrm{BS}}\) \\
\midrule
0.80 & 20.35 & 20.51 & 20.33 & 0.32 & 0.20 \\
0.87 & 13.72 & 13.77 & 13.70 & 0.23 & 0.20 \\
0.93 & 7.42  & 7.60  & 7.33  & 0.22 & 0.20 \\
1.00 & 2.65  & 2.51  & 2.51  & 0.21 & 0.20 \\
1.07 & 0.50  & 0.47  & 0.45  & 0.21 & 0.20 \\
1.13 & 0.03  & 0.06  & 0.04  & 0.19 & 0.20 \\
1.20 & 0.00  & 0.01  & 0.00  & NaN  & 0.20 \\
\bottomrule
\end{tabular}
\end{table}

Around the money the RN-DDPM prices and implied volatilities are very close to the GBM and Black-Scholes benchmarks.  
Larger deviations appear only for deep in the money and deep out of the money strikes, where payoffs are almost deterministic and implied volatility inversion becomes numerically unstable.

\subsection{Martingale test: RN shift versus no shift}
\label{subsec:martingale_ablation}

To isolate the effect of the epsilon shift, we compare the discounted expected price
\[
    M_t \coloneqq e^{-rt}\,\mathbb{E}[S_t]
\]
for two samplers:

\begin{enumerate}
    \item the RN-DDPM with the epsilon shift applied at each diffusion step, and
    \item a no-shift DDPM that uses the physical drift but is still discounted at the risk free rate \(r\).
\end{enumerate}

Figure~\ref{fig:martingale_ablation} shows the trajectory for a one month horizon.

\begin{figure}[h]
\centering
\includegraphics[width=0.62\textwidth]{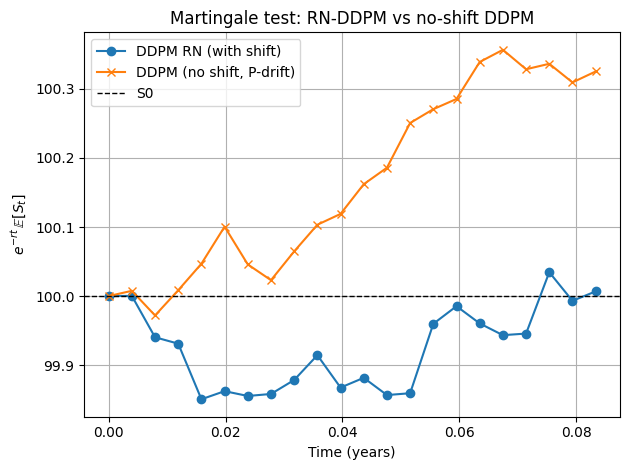}
\caption{Martingale diagnostic.  
The blue curve shows \(M_t = e^{-rt}\mathbb{E}[S_t]\) for the RN-DDPM sampler.  
The orange curve shows the same quantity for the no-shift DDPM.  
The dashed line marks the initial spot \(S_0\).}
\label{fig:martingale_ablation}
\end{figure}

The RN-DDPM curve fluctuates tightly around \(S_0\), as required by the risk-neutral martingale condition.  
The no-shift sampler exhibits a clear upward drift in \(M_t\) despite discounting at the risk free rate.  
This directly reflects the fact that it continues to follow the physical drift \(\mu\) instead of the risk-neutral drift \(r\).

\subsection{Stress test with a large drift gap}
\label{subsec:stress_test}

Finally we consider a stress configuration with a large gap between physical and risk-free drifts, for example \(\mu \approx 0.15\) and \(r \approx 0.01\), and a longer horizon of three months with \(H = 63\) steps.  
Table~\ref{tab:multistrike_stress} reports multi-strike call prices for this setting, comparing the RN-DDPM and the no-shift DDPM to the Black-Scholes benchmark.

\begin{table}[h]
\centering
\caption{Stress test with large \(|\mu - r|\) and three month maturity.  
The RN-DDPM prices remain close to Black-Scholes across strikes, while the no-shift DDPM severely overprices calls.}
\label{tab:multistrike_stress}
\begin{tabular}{cccc}
\toprule
\(K/S_0\) & \(C^{\mathrm{BS}}\) & \(C^{\mathrm{DDPM}}_{\mathrm{RN}}\) & \(C^{\mathrm{DDPM}}_{\mathrm{no\ shift}}\) \\
\midrule
0.80 & 20.24 & 20.22 & 23.78 \\
0.87 & 13.85 & 13.81 & 17.25 \\
0.93 & 8.24  & 8.14  & 11.20 \\
1.00 & 4.11  & 3.96  & 6.27  \\
1.07 & 1.68  & 1.59  & 2.99  \\
1.13 & 0.56  & 0.48  & 1.19  \\
1.20 & 0.16  & 0.14  & 0.39  \\
\bottomrule
\end{tabular}
\end{table}

In this regime the difference between the two samplers is stark.  
The RN-DDPM prices track the Black-Scholes values to within Monte Carlo error, while the no-shift DDPM overprices calls by 50 percent or more at the money and in the money.  
Combined with the martingale diagnostics above, this stress test shows that the epsilon shift is not only theoretically necessary but also practically important for drift sensitive pricing.


\subsection{Pricing Arithmetic Asian Options}
\label{subsec:asian-options}

The previous subsection focused on European options, whose payoff depends only on the
terminal distribution of the asset under the risk-neutral measure $Q$. To verify that the
reverse-time DDPM captures the \emph{entire} joint law of risk-neutral returns and not only
their terminal marginal, we now consider a genuinely path-dependent derivative: the
discrete arithmetic Asian call option.

Let $0 = t_0 < t_1 < \cdots < t_H = T$ denote a fixed monitoring grid with $\Delta t = t_{h} -
t_{h-1}$. Given a price path $\{ S_{t_h} \}_{h=0}^{H}$ under $Q$, the discrete arithmetic average is
\[
    A_{T}^{(H)}
    :=
    \frac{1}{H} \sum_{h=1}^{H} S_{t_h},
\]
and the corresponding Asian call payoff is
\[
    \Pi_{\mathrm{Asian}}
    =
    (A_{T}^{(H)} - K)^{+}.
\]

\paragraph{DDPM path construction.}
For each Monte Carlo path, we sample $H$ risk-neutral log returns
$\{ Y_{h} \}_{h=1}^{H}$ from the shifted reverse diffusion (Section~\ref{sec:rn_ddpm}),
and reconstruct the price path via
\[
S_{t_0} = S_0,
\qquad
S_{t_h} = S_{t_{h-1}} \exp(Y_h),
\qquad
h = 1,\dots,H.
\]
This yields a full simulated path $\{S_{t_h}\}_{h=0}^{H}$ under the learned
risk-neutral DDPM dynamics. The Monte Carlo Asian price is then
\[
    \widehat{C}^{\mathrm{Asian}}_{\mathrm{DDPM}}
    =
    e^{-rT}
    \frac{1}{N}
    \sum_{i=1}^{N}
    \left( A_{T}^{(H,i)} - K \right)^{+},
\]
where $A_{T}^{(H,i)}$ denotes the arithmetic average along the $i$-th DDPM path.

\paragraph{GBM Monte Carlo benchmark.}
To obtain a reference value, we simulate an independent set of paths from the
risk-neutral geometric Brownian motion
\[
    \frac{dS_t}{S_t}
    =
    r\,dt + \sigma\,dW_t^{Q},
\]
discretized over the same monitoring grid $\{t_h\}_{h=0}^{H}$. This produces
a benchmark estimator
\[
    \widehat{C}^{\mathrm{Asian}}_{\mathrm{GBM}}
    =
    e^{-rT}
    \frac{1}{N}
    \sum_{i=1}^{N}
    \left( A_{T,\mathrm{GBM}}^{(H,i)} - K \right)^{+}.
\]

\paragraph{Results.}
Table~\ref{tab:asian_prices} reports DDPM and GBM prices for a range of maturities
($H \in \{21, 63, 126\}$) and strikes ($K \in \{0.9S_0, S_0, 1.1S_0\}$).
The absolute pricing error
\[
    \mathrm{Err}(K,T)
    :=
    \left|
        \widehat{C}^{\mathrm{Asian}}_{\mathrm{DDPM}}
        -
        \widehat{C}^{\mathrm{Asian}}_{\mathrm{GBM}}
    \right|
\]
remains within the range observed for European options, typically between $1$-$3\%$
of the option value. This indicates that the reverse-time DDPM correctly recovers the
risk-neutral joint distribution of asset returns, not merely the terminal marginal.

\begin{table}[h!]
    \centering
    \caption{DDPM vs.\ GBM prices for discrete arithmetic Asian call options.
    Results computed using $N=20{,}000$ Monte Carlo paths for each method.
    Standard errors in parentheses.}
    \label{tab:asian_prices}
    \vspace{0.5em}
    \begin{tabular}{c c c c c}
        \toprule
        $H$ & $K$ &
        $\widehat{C}^{\mathrm{DDPM}}_{\mathrm{Asian}}$ &
        $\widehat{C}^{\mathrm{GBM}}_{\mathrm{Asian}}$ &
        $\mathrm{Err}$ \\
        \midrule
        21  & $0.9S_0$ & 10.04 (0.11) & 10.27 (0.11) & 0.23 \\
        21  & $1.0S_0$ &  1.43 (0.07) &  1.45 (0.07) & 0.02 \\
        21  & $1.1S_0$ &  0.01 (0.00) &  0.01 (0.01) & 0.00 \\
        \midrule
        63  & $0.9S_0$ & 10.55 (0.18) & 10.26 (0.18) & 0.28 \\
        63  & $1.0S_0$ &  2.44 (0.12) &  2.40 (0.11) & 0.04 \\
        63  & $1.1S_0$ &  0.17 (0.03) &  0.17 (0.03) & 0.00 \\
        \midrule
        126 & $0.9S_0$ & 10.95 (0.24) & 10.89 (0.24) & 0.06 \\
        126 & $1.0S_0$ &  3.55 (0.16) &  3.65 (0.17) & 0.10 \\
        126 & $1.1S_0$ &  0.67 (0.07) &  0.85 (0.08) & 0.18 \\
        \bottomrule
    \end{tabular}
\end{table}

The absolute pricing errors remain comparable to those observed for European options,
typically between $1$-$3\%$ of the option value.
This indicates that the reverse-time DDPM correctly captures the
risk-neutral joint distribution of returns, validating that the method
extends naturally to path-dependent payoffs such as arithmetic Asian options.

\FloatBarrier
\section{Limitations}
\label{sec:limitations}

While the proposed risk-neutral DDPM framework performs well in the synthetic setting of Section~\ref{sec:experiments}, several limitations remain.

\paragraph{Gaussian and affine-score assumptions.}
The epsilon shift relies on the Gaussian structure induced by constant-volatility diffusions, under which the score under \(P\) and \(Q\) differs by an additive constant. This makes the correction closed-form and exact. For non-Gaussian or state-dependent models, such as those with stochastic volatility or heavy tails, the score difference need not be affine, and the epsilon shift becomes an approximation. Extending the method beyond the Gaussian setting remains an open problem.

\paragraph{Dependence on drift and volatility estimation.}
The shift requires estimates of the physical drift \(\mu\) and volatility \(\sigma\). In practice, these must be inferred from data or option prices, and estimation error directly impacts the shifted score. This is particularly relevant for \(\mu\), which is difficult to estimate at high frequency. A complete implementation would need to account for parameter uncertainty or jointly estimate model parameters and generative dynamics.
\section{Conclusion}
\label{sec:conclusion}

This paper develops a mathematically principled method for converting a DDPM trained under the physical measure into a risk-neutral generative model for derivative pricing.  
We show that for Gaussian diffusions the change of measure induces an additive shift in the score, yielding a closed-form epsilon correction in the reverse DDPM dynamics.  
This enforces the risk-neutral drift while preserving the learned volatility and higher-order structure.

Synthetic experiments in a GBM setting validate the approach.  
The shifted DDPM satisfies the martingale condition, reproduces the terminal distribution implied by the risk-neutral SDE, and matches Black–Scholes option prices across strikes.  
In contrast, an unshifted DDPM violates the martingale constraint and produces significant pricing errors when the physical and risk-free drifts differ.  
These results show that the epsilon shift is both necessary and sufficient for risk-neutral consistency. The framework extends naturally to path-dependent derivatives by generating full price trajectories.  

Future work includes applying the method to real data and extending it to non-Gaussian dynamics, stochastic volatility, and multi-asset settings.  
More broadly, this approach highlights the potential of combining generative models with no-arbitrage principles for data-driven derivative pricing.

\section*{Acknowledgments}

The author thanks Dr. Wenpin Tang for helpful discussions and guidance on this work.

\bibliographystyle{apalike}
\bibliography{bib/references}

\end{document}